\documentclass[letterpaper,11pt]{article}
\usepackage[english]{babel}
\usepackage[T1]{fontenc}
\usepackage{latexsym,amsmath}
\usepackage{amssymb}
\usepackage[dvips]{graphicx}
\usepackage{fullpage}

{\makeatletter
 \gdef\and{\qquad}
 \gdef\maketitle{{
   \def\thefootnote{\@fnsymbol\c@footnote}
   \begin{center}
     \LARGE \@title
   \end{center}\medskip
   \centerline{\@author}\bigskip
   \@thanks
   \setcounter{footnote}{0}}}}

\newtheorem{thm}{Theorem}
\newtheorem{lemma}{Lemma}

\newenvironment{proof}{\noindent {\bf Proof:}}{{\hfill $\Box$}}

\newcommand{\toppath}{top path}
\newcommand{\bottomtree}{bottom tree}


\makeatletter
\newcommand{\listoftodo}{\cleardoublepage\chapter*{TODOs}\@starttoc{tod}}

\newcommand{\todo}[1]{}
\newcommand{\l@todo}[2]{\par\noindent\parbox[t]{2cm}{Side #2}%
                                     \parbox[t]{\linewidth-2cm}{#1}}
\makeatother

\title{An $O(\log \log n)$-Competitive Binary Search Tree with Optimal Worst-Case Access Times}

\author{Prosenjit Bose \thanks{School of Computer Science, Carleton
University. The authors are partially supported by NSERC and MRI. Email:
\{jit,karim,vida\}@cg.scs.carleton.ca.} \and Karim Dou\"ieb
\footnotemark[1] \and Vida Dujmovi\'c \footnotemark[1] \and Rolf Fagerberg
\thanks{Department of Mathematics and Computer Science, University of
Southern Denmark. Email: rolf@imada.sdu.dk.}}

\date{}

\begin{document}
\sloppy
\maketitle

\abstract{We present the \emph{zipper tree}, an $O(\log \log
n)$-competitive online binary search tree that performs each access in
$O(\log n)$ worst-case time. This shows that for binary search trees,
optimal worst-case access time and near-optimal amortized access time can
be guaranteed simultaneously. \todo{Note: took out a sentence of abstract
(seemed unclear to me). Rolf.}
}

\section{Introduction}
A \emph{dictionary} is a basic data structure for storing and retrieving
information. The \emph{binary search tree} (BST) is a well-known and widely
used dictionary implementation which combines efficiency with flexibility
and adaptability to a large number of purposes. It constitutes one of the
fundamental data structures of computer science.

In the past decades, many BST schemes have been developed which perform
element accesses (and indeed many other operations) in $O(\log n)$ time,
where $n$ is the number of elements in the tree. This is the optimal
single-operation worst-case access time in a comparison based model.
Turning to \emph{sequences} of accesses, it is easy to realize that for
specific access sequences, there may be BST algorithms which serve $m$
accesses in less than $\Theta(m \log n)$ time. A common way to evaluate how
well the performance of a given BST algorithm adapts to individual
sequences, is \emph{competitive analysis}: For an access sequence $X$,
define ${\rm OPT}(X)$ to be the minimum time needed by any BST algorithm to
serve it.  To make this precise, a more formal definition of a BST model
and of the sequences considered is needed---standard in the area is to use
the binary search tree model (BST model) defined by Wilber~\cite{wilber},
in which the only existing non-trivial lower bounds on ${\rm OPT}(X)$ have
been proven~\cite{tango,wilber}.
%
%
A given BST algorithm $A$ is then said to be $f(n)$-\emph{competitive} if
it performs $X$ in $O(f(n)\, {\rm OPT}(X))$ time for all $X$.

In 1985, Sleator and Tarjan~\cite{splay} developed a BST called \emph{splay
trees}, which they conjectured to be $O(1)$-competitive. Much of the research
on BST's efficiency on individual input sequences has grown out of this
conjecture.  However, despite decades of research, the conjecture is still
open. More generally, it is unknown if there exist asymptotically
optimal BST
data structures. In fact, for many years the best known competitive ratio
for any BST structure was $O(\log n)$, which is achieved by plain
balanced static trees.

This situation was recently improved by Demaine~\emph{et al.}, who in a
seminal paper~\cite{tango} developed a $O(\log \log n)$-competitive BST
structure, called the \emph{tango tree}. This was the first improvement
in competitive ratio for BSTs over the trivial $O(\log n)$ upper
bound.

Being $O(\log \log n)$-competitive, tango trees are always at most a factor
$O(\log \log n)$ worse than ${\rm OPT}(X)$.
On the other hand, they may actually pay this multiplicative overhead at
each access, implying that they have $\Theta(\log \log n \log n)$
worst-case access time, and use $\Theta(m \log \log n \log n)$ time on some
access sequences of length $m$. In comparison, any balanced BST (even
static) has $O(\log n)$ worst-case access time and spends $O(m \log n)$ on
every access sequence.

The problem we consider in this paper is whether it is possible to combine the
best of these bounds---that is, whether an $O(\log \log n)$-competitive BST
algorithms that performs each access in optimal $O(\log n)$ worst-case time
exists. We answer it affirmatively by presenting a data structure achieving
these complexities. It is based on the overall framework of tango
trees---however, where tango trees use red-black trees~\cite{redblack} for
storing what is
called preferred paths, we develop a specialized BST representation of the
preferred paths, tuned to the purpose. This representation is the main
technical contribution, and its description takes up the bulk of the paper.  

In the journal version of their seminal paper on tango trees,
Demaine~\emph{et al.}\ suggested that such a structure
exists. Specifically, in the further work section, the authors gave a short
sketch of a possible solution. Their suggested approach, however, relies on
the existence of a BST supporting dynamic finger, split and merge in O(log
r) worst-case time where $r$ is $1$ plus the rank difference between the
accessed element and the previously accessed element. Such a BST could
indeed be used for the auxiliary tree representation of preferred paths.
However, the existence of such a structure (in the BST-model) is an open
problem.  Consequently, since the publication of their work, the authors
have revised their stance and consider the problem solved in this paper to
be an open problem \cite{john}.  Recently, Woo~\cite{woo} made some
progress concerning the existence of a BST having the dynamic finger
property in worst-case. He developed a BST algorithm satisfying, based on
empirical evidence, the dynamic finger property in worst-case.
Unfortunately this BST algorithm does not allow insertion/deletion or
split/merge operations, thus it cannot be used to maintain the preferred
paths in a tango tree.

After the publication of the tango tree paper, two other $O(\log \log
n)$-competitive BSTs have been introduced by Derryberry~\emph{et
al.}~\cite{multisplay,MultisplayThesis} and
Georgakopoulos~\cite{loglognsplay}.  The multi-splay
trees~\cite{multisplay} are based on tango trees, but instead of using
red-black trees as auxiliary trees, they use splay trees~\cite{splay}. As a
consequence, multi-splay trees can be
shown~\cite{multisplay,MultisplayThesis} to satisfy additional properties,
including the scanning and working-set bounds of splay trees, while
maintaining $O(\log \log n)$-competitiveness.  Georgakopoulos uses the
interleave lower bound of Demaine {\em et al.} to develop a variation of
splay trees called {\em chain-splay} trees that achieves $O(\log \log
n)$-competitiveness while not maintaining any balance condition explicitly.
However, neither of these two structures achieves a worst-case single
access time of $O(\log n)$. A data structure achieving the same running
time as tango trees alongside $O(\log n)$ worst-case single access time was
developed by Kujala and Elomaa~\cite{poketree}, but this data structure
does not adhere to the BST model (in which the lower bounds on ${\rm
OPT}(X)$ are proved).

The rest of this paper is organized as follows: In
Section~\ref{preliminaries}, we formally define the model of BSTs
and the access sequences considered. We state the lower bound on ${\rm
OPT}(X)$ developed in~\cite{tango,wilber} for analyzing the competitive
ratio of BSTs. We also describe the central ideas of tango trees.  In
Section~\ref{hybrid}, we introduce a preliminary data structure called
\emph{hybrid trees}, which does not fit the BST model proper, but which is
helpful in giving the main ideas of our new BST structure. Finally in
Section~\ref{zippertree}, we develop this structure further to fit the BST
model. This final structure, called \emph{zipper trees}, is a BST achieving
the optimal worst-case access time while maintaining the $O(\log \log
n)$-competitiveness property.


\section{Preliminaries}
\label{preliminaries}

\subsection{BST Model}
\label{model}
In this paper we use the binary search tree model (BST model) defined
by Wilber~\cite{wilber}, which is standard in the area.  Each node stores a
key from a totally ordered universe, and the keys obey in-order: at any
node, all of the keys in its left subtree are less than the key stored in
the node, and all of the keys in its right subtree are greater (we assume
no duplicate keys appear). Each node has three pointers, pointing to its
left child, right child, and parent. Each node may keep a
constant\footnote{According to standard conventions, $O(\log_2 n)$ bits are
considered as constant.}  amount of additional information, but no further
pointers may be used.

To perform an access, we are given a pointer initialized to the root. An access
consists of moving this pointer from a node to one of its adjacent nodes
(through the parent pointer or one of the children pointers) until it reaches
the desired element. Along the way, we are allowed to update the fields and
pointers in any nodes that the pointer touches. The access cost is the number
of nodes touched by the pointer.  

As is standard in the area, we only consider sequences consisting of element
accesses on a fixed set $S$ of $n$~elements. In particular, neither
unsuccessful searches, nor updates appear.

\subsection{Interleave Lower Bound}
The interleave bound is a lower bound on the time taken by any binary
search tree in the BST model to perform an access sequence
$X=\{x_1,x_2,\ldots,x_m\}$. The interleave bound was developed by
Demaine~\emph{et al.}~\cite{tango} and was derived from a previous bound of
Wilber~\cite{wilber}.

Let $P$ be a static binary search tree of minimum height, built on the set
of keys $S$. We call $P$ the \emph{reference} tree. For each node $y$ in
$P$, we consider the accesses $X$ to keys in the nodes in the subtree of
$P$ rooted at $y$ (including $y$). Each access of this subsequence is then
labelled ``left'' or ``right'', depending on whether the accessed node is
in the left subtree of $y$ (including $y$), or in its right subtree,
respectively. The \emph{amount of interleaving through $y$} is the number
of alternations between left and right labels in this subsequence. The
interleave bound ${\rm IB}(X)$ is the sum of these interleaving amounts
over all nodes $y$ in $P$.
The exact statement of the lower bound from~\cite{tango} is as follows:
\begin{thm}
\label{ib}
For any access sequence $X$, 
${\rm IB}(X)/2- n$ is a lower bound on ${\rm OPT}(X)$.
\end{thm}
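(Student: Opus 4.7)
My plan is to prove the bound by a witness-touching argument in the style of Wilber~\cite{wilber}. For each non-leaf node $y$ of the reference tree $P$ and each moment $i$ during the execution of any BST algorithm on $X$, I will define the \emph{transition point} $\tau_y^i$ as the root of the smallest subtree of the current executing tree $T$ that contains every key belonging to $P_y$, the subtree of $P$ rooted at $y$. Intuitively, $\tau_y^i$ is the node of $T$ at which the keys of $y$'s left and right $P$-subtrees first ``mix.'' The argument then combines a touching lemma with an injectivity lemma to convert alternations into forced node touches.

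First I would prove the touching lemma: if access $x_i$ causes an alternation at $y$, then the pointer walk during access $x_i$ must pass through $\tau_y^i$. The reason is that the previous access inside $P_y$ landed on the opposite side of $y$ (say, in the right $P$-subtree), so all the keys the pointer now needs lie inside a child-subtree of $\tau_y^i$ different from the one into which the earlier walk descended; reaching $x_i$ from the root of $T$ therefore forces a revisit of $\tau_y^i$. Next I would prove the injectivity lemma: at every fixed moment the map $y \mapsto \tau_y$ is injective. If two distinct $y, z \in P$ had $\tau_y = \tau_z = t$, then $P_y$ and $P_z$ would both be covered by $t$'s subtree in $T$ but by no proper subtree thereof, and combined with the in-order invariant of $T$ together with the fact that the key intervals of $P_y$ and $P_z$ are either nested or disjoint (since $P$ is itself a BST), this forces a structural contradiction.

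Given both lemmas, the number of nodes touched during access $x_i$ is at least the number of $y$ for which $x_i$ triggers an alternation, because the witnessing transition points $\tau_y^i$ at time $i$ are pairwise distinct. Summing over all accesses therefore lower-bounds the total work of any BST algorithm by the total count of alternations across the nodes of $P$, namely $\mathrm{IB}(X)$, up to a careful accounting of edge cases. I expect the factor of $1/2$ and the additive correction $-n$ to emerge from this accounting: only alternations in one fixed direction can be cleanly charged to a distinct touch (losing a factor of two), and the first access within each $P_y$ produces no alternation at all (losing at most one touch per node of $P$, hence $n$ overall).

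The step I anticipate to be the main obstacle is the injectivity lemma. The touching lemma is fundamentally geometric: once $\tau_y^i$ is defined correctly, any pointer walk that switches sides of $y$ has no choice but to cross it. Injectivity, by contrast, is a genuine structural claim that two different subtrees of $P$ cannot share a minimal covering subtree in $T$, and it is the place that truly exploits the BST-model assumption that $T$ stores keys in-order. Establishing it cleanly will likely require case analysis on the relative position of $y$ and $z$ in $P$ (ancestor-descendant versus disjoint) together with a careful comparison of the minimal $T$-subtrees covering their respective key sets.
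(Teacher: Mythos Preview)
The paper does not prove this theorem; it simply quotes it from Demaine~\emph{et al.}~\cite{tango} (who in turn build on Wilber~\cite{wilber}). So there is no ``paper's own proof'' to compare against, and your plan is in fact a sketch of the argument in~\cite{tango}. That said, your sketch has a genuine gap worth flagging.

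Your formal definition of $\tau_y^i$ --- ``the root of the smallest subtree of $T$ containing every key of $P_y$'' --- is the LCA in $T$ of the elements of $P_y$, equivalently the \emph{first} node of $P_y$ met on the root-to-leaf path in $T$. This does not match your own stated intuition (``the node where $y$'s left and right $P$-subtrees first mix''), and with this LCA definition the injectivity lemma is simply false: if $z$ is a descendant of $y$ in $P$, then $P_z\subset P_y$, and nothing prevents the first $P_y$-node met in $T$ from already lying in $P_z$, giving $\tau_y=\tau_z$. The definition actually used in~\cite{tango} is the minimum-depth node $z\in T$ whose root-to-$z$ path contains at least one node from $y$'s left region and at least one from $y$'s right region; with this definition injectivity holds, and the proof is not a case analysis on the relative position of $y$ and $z$ but a short argument using the fact that $\tau_y$ is always the \emph{deeper} of the two ``side-LCAs''. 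You are also missing an ingredient that does real work in~\cite{tango}: a \emph{stability} lemma showing that $\tau_y$ does not change until it is touched. Your touching argument implicitly assumes this (``reaching $x_i$ forces a revisit of $\tau_y^i$''), but between the previous access and the current one the tree $T$ may have been restructured arbitrarily, so without stability there is nothing tying the current $\tau_y^i$ to what happened at the earlier access. Fixing the definition to match your intuition and adding the stability lemma would bring your outline in line with the proof in~\cite{tango}.
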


\subsection{Tango Trees}
We outline the main ideas of tango trees~\cite{tango}.
As in the previous section, denote by the reference tree $P$ a static
binary search tree of height $O(\log n)$ built on a set of keys $S$. The
\emph{preferred child} of an internal node $y$ in $P$ is defined as its
left or right child depending on whether the last access to a node in the
subtree rooted at $y$ (including $y$) was in the left subtree of $y$
(including $y$) or in its right subtree respectively. We call a maximal
chain of preferred children a \emph{preferred path}. The set of preferred
paths naturally partitions the elements of $S$ into disjoint subsets of
size $O(\log n)$ (see the left part of Figure~\ref{fig-tango}). Remember
that $P$ is a static tree, only the preferred paths may evolve over time
(after each access).

The ingenious idea of tango trees is to represent the nodes on a preferred
path as a balanced \emph{auxiliary} tree of height $O(\log \log n)$. The
tango tree can be seen as a collection of auxiliary trees linked
together. The leaves of an auxiliary tree representing a preferred path $p$
link to the root of auxiliary trees representing the paths immediately
below $p$ in $P$ (see Fig.~\ref{fig-tango}), with the links uniquely
determined by the inorder ordering. The auxiliary tree containing the root
of $P$ constitutes the top-part of the tango tree. In order to distinguish
auxiliary trees within the tango tree, the root of each auxiliary tree is
marked (using one bit).

\begin{figure}
    \begin{center}
        \includegraphics[width=0.6\textwidth]{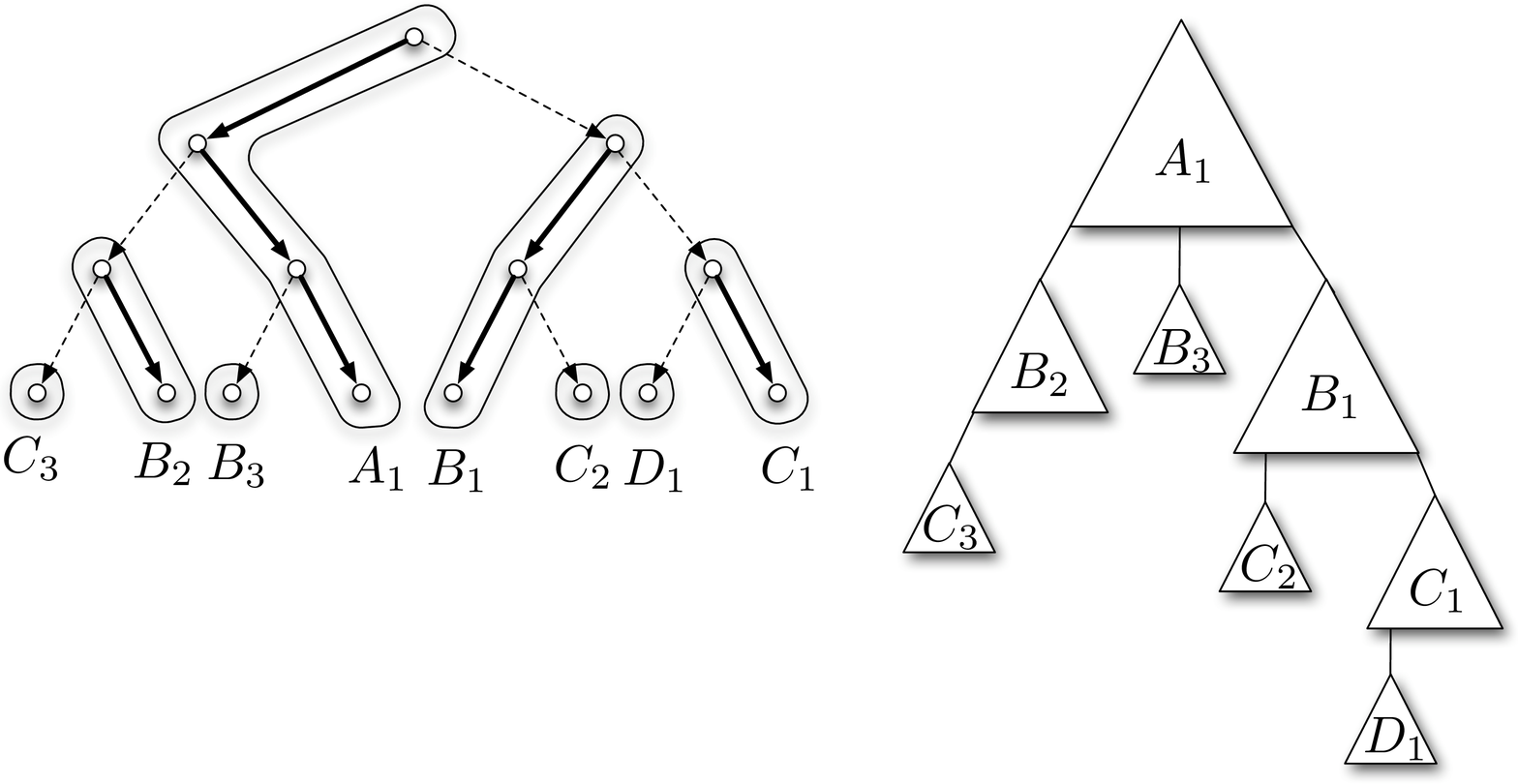}
    \end{center}
    \caption{\label{fig-tango} On the left, reference tree P with its
    preferred paths. On the right, the tango tree representation of P.}
\end{figure}

\todo{Why are the same letters used repeatedly for different subtrees in
Fig.~\ref{fig-tango}?  Rolf.}

Note that the reference tree $P$ is not an explicit part of the structure,
it just helps to explain and understand the concept of tango trees. When an
access is performed, the preferred paths of $P$ may change. This change is
actually a combination of several cut and concatenation operations
involving subpaths. Auxiliary trees in tango tree are implemented as
red-black trees~\cite{redblack}, and~\cite{tango} show how to implement
these cut and concatenation operations using standard split and join
operations on red-black tree. Here are the main two operations used to
maintain tango trees:
\begin{itemize}
\item C{\scriptsize UT}-T{\scriptsize ANGO}($A$, $d$) -- cut the red-black tree $A$ into two red-black trees, one storing the path of all nodes of depth at most $d$, and the other storing the path of all nodes of depth greater than $d$.
\item C{\scriptsize ONCATENATE}-T{\scriptsize ANGO}($A$, $B$) -- join two red-black trees that store two disjoint paths where the bottom of one path (stored in $A$) is the parent of the top of the other path (stored in $B$). So the root of $B$ is attached to a leaf of $A$. 
\end{itemize}

These operations take $O(\log k)$ time for trees of size $k$ using extra
information stored in nodes. As the trees store paths in $P$, we have $k =
O(\log n)$. In addition to storing the key value and the depth in $P$, each
node stores the minimum and maximum depth over the nodes in its subtree
within its auxiliary tree. This additional data can be trivially maintained
in red-black trees with a constant-factor overhead.

Hence, if an access passes $i$ different preferred paths in $P$, the
necessary change in the tango tree will be $O(i)$ cut and concatenation
operations, which is performed in $O(i\log \log n)$ time. Over an access
sequence $X$ the total number of cut and concatenation operations performed
in $P$ corresponds to the interleave bound $O({\rm IB}(X))$, thus tango
tree performs this access sequence in $O(\log \log n \, {\rm IB}(X))$ time.

\section{Hybrid Trees}
\label{hybrid}

In this section, we introduce a data structure called \emph{hybrid trees},
which has the right running time, but which does not fit the BST model
proper. However, it is helpful intermediate step which contains the main
ideas of our final BST structure.

\subsection{Path Representation}

For all preferred paths in $P$, we keep the top $\Theta(\log \log n)$ nodes
exactly as they appear on the path. We call this the \emph{\toppath{}}. The
remaining nodes (if any) of the path we store as a red-black tree, called
the \emph{\bottomtree}, which we attach below the \toppath{}. Since a
preferred path has size $O(\log n)$, this \bottomtree{} has height $O(\log
\log n)$.
More precisely, we will maintain the invariant that a \toppath{} has length
in $[\log\log n, 3\log\log n]$, unless no \bottomtree{} appears, in
which case the constraint is  $[0,3\log\log n]$. (This latter
case, where no \bottomtree{} appears, will induce simple and obvious
variants of the algorithms in the remainder of the paper, variants which we
for clarity of exposition will not mention further.)

A \emph{hybrid tree} consists of all the preferred paths of $P$,
represented as above, linked together to form one large tree, analogous
to tango trees.

The required worst-case search complexity of hybrid trees is captured by
the following lemma.
\begin{lemma}
\label{height}
A hybrid tree $T$ satisfies the following property:
$$
d_T(x)= O(d_P(x)) \quad \forall x \in S,
$$ 
where $d_T(x)$ and $d_P(x)$ is defined as the depth of the node $x$ in the
tree $T$ and in the reference tree $P$, respectively. In particular, $T$
has $O(\log n)$ height.
\end{lemma}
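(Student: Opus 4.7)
The plan is to decompose the root-to-$x$ path in the reference tree $P$ into its restrictions to distinct preferred paths and charge the cost of each segment locally. Let $p_1,p_2,\ldots,p_i$ denote the preferred paths of $P$ (in top-down order) intersecting the root-to-$x$ path, with $x$ lying on $p_i$. Let $\ell_j$ be the number of nodes of $p_j$ that appear on the root-to-$x$ path in $P$, so $d_P(x)=\sum_{j=1}^{i}\ell_j$. Analogously, $d_T(x)$ is the sum, over $j=1,\ldots,i$, of the depth in $p_j$'s hybrid representation of the node at which we either hand off to $p_{j+1}$ (for $j<i$) or reach $x$ itself (for $j=i$). So it suffices to show that this per-path contribution is $O(\ell_j)$.

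I would then carry out a two-case analysis on where the relevant node $v_j$ of $p_j$ (the $\ell_j$-th node from the top of $p_j$) is stored in the hybrid representation of $p_j$. In the first case, $v_j$ lies on the \toppath{}; since the \toppath{} is stored as a chain exactly mirroring the path in $P$, its depth inside $p_j$'s representation is at most $\ell_j-1$, and the leaf linking to $p_{j+1}$'s subtree (or $v_j=x$ itself when $j=i$) is reached within $O(\ell_j)$ additional steps. In the second case, $v_j$ lies in the \bottomtree{}; then $\ell_j$ exceeds the \toppath{} length, and by the maintained invariant the \toppath{} has length at least $\log\log n$, so $\ell_j\ge\log\log n$. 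The cost to reach $v_j$ is at most the \toppath{} length plus the \bottomtree{} height, both of which are $O(\log\log n)$, hence $O(\ell_j)$. In either case the contribution is $O(\ell_j)$.

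Summing the per-path bounds gives $d_T(x)=\sum_{j=1}^{i}O(\ell_j)=O(d_P(x))$, and because $P$ has height $O(\log n)$, the tree $T$ has height $O(\log n)$, yielding the second conclusion.

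The only genuinely delicate step is the second case of the per-path analysis: naively one only gets a $O(\log\log n)$ bound on the cost of traversing $p_j$'s representation, which is not by itself $O(\ell_j)$. The key is to observe that the very condition that puts $v_j$ in the \bottomtree{} also forces $\ell_j$ to be at least the \toppath{} length, and the invariant that the \toppath{} has length $\ge\log\log n$ whenever a \bottomtree{} is present is precisely what lets the $O(\log\log n)$ cost be absorbed into $O(\ell_j)$. Once this interaction between the invariant and the case split is made explicit, the rest is an additive accounting.
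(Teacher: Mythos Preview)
Your proposal is correct and follows essentially the same approach as the paper: a per-preferred-path two-case analysis (node in the \toppath{} versus node in the \bottomtree{}), using the invariant that the \toppath{} has length at least $\log\log n$ whenever a \bottomtree{} is present to absorb the $O(\log\log n)$ cost in the second case, then summing over paths. Your write-up is more explicit about the decomposition into $\ell_j$'s, but the argument is the same.
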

\begin{proof}
Consider a preferred path $p$ in $P$ and its representation tree $h$. The
distance, in terms of number of links to follow, from the root of $h$ to one
of its nodes or leaves $x$ is no more than a constant times the distance
between $x$ and the root of $p$. Indeed, if $x$ is part of the \toppath{},
then the distance to the root of the path by construction is the same in
$h$ and $p$. Otherwise, this distance increases by at most a constant
factor, since $h$ has a height of $O(\log \log n)$ and the distance in $p$
is already $\Omega(\log \log n)$.

Since the number of links followed between preferred paths is the same in
$P$ and $T$, the lemma follows. \todo{I (Rolf) here removed an argument I
didn't see the need for. Karim, please check.\\
(Karim) That's correct.}
\end{proof}

\subsection{Maintaining Hybrid Trees under Accesses}
\label{maintaining-hybrid-trees}

Like in tango trees, the path $p$ traversed in $P$ to reach a desired node
may pass through several preferred paths. During this access the preferred
paths in $P$ must change such that $p$ becomes the new preferred path
containing the root. This is performed by cut and concatenate operations on
the preferred paths passed by $p$. When $p$ leaves a preferred path, this
is cut at a depth corresponding to the depth in $P$ of the point of leave
of the preferred path, and the top part cut out is concatenated with the
next preferred path to be traversed.

We note that the algorithm may as well restrict itself to cutting when
traversing $p$, producing a sequence of cut out parts hanging below each
other, which can then be concatenated in one go at the end, producing the
new preferred path starting at the root. We will use this version below.

In this subsection, we will show how to maintain the hybrid tree
representation of the preferred paths after an access. Our goal is to
describe how to perform the operations \emph{cut} and \emph{concatenate} in
the following complexities: When the search path passes only the \toppath{}
of a preferred path, the cut procedure takes $O(k)$ time, where $k$ is the
number of nodes traversed in the \toppath{}. When the search path passes
the entire \toppath{} and ends up in the \bottomtree{}, the cut procedure
takes $O(\log \log n )$ time.  The concatenation operation, which glues
together all the cut out path representation parts at the end of the
access, is bounded by the time used by the search and the cut operations
performed during the access.

Assuming these running times, it follows, by the invariant that all
\toppath{}s (with \bottomtree{}s below them) have length $\Theta(\log\log
n)$, that the time of an access involving $i$ cut operations in $P$ is
bounded both by the number of nodes on the search path~$p$, and by $i \log
\log n$. By Lemma~\ref{height}, this is $O(\min\{\log n, i \log \log n\})$
time. Hence, we will have achieved optimal worst-case access time while
maintaining $O(\log \log n)$-competitiveness.

\paragraph{CUT:} Case~1: We only traverse the \toppath{} of a path
representation. Let $k$ be the number of nodes traversed in this \toppath{}
and let $x$ be the last traversed node in this \toppath{}. The cut
operation marks the node succeeding $x$ on the \toppath{} as the new root of the
path representation, and unmarks the other child of $x$.

The cut operation now has removed $k$ nodes from the \toppath{} of the path
representation. This implies that we possibly have to update the
representation, since the $\Theta(\log \log n)$ bound on the size of its
\toppath{} has to be maintained. Specifically, if the size of the top path
drops below $2 \log\log n$, we will move some nodes from the \bottomtree{}
to the \toppath{}. The nodes should be those from the \bottomtree{} having
smallest depth (in $P$), i.e., the next nodes on the preferred path in
$P$. After a cut of $k$ nodes it is for small $k$ (smaller than $\log\log
n$) not clear how to extract the next $k$ nodes from the \bottomtree{} in
$O(k)$ time. Instead, we use an \emph{extraction process}, described below,
which extracts the next $\log \log n$ nodes from the \bottomtree{} in
$O(\log \log n)$ steps and run this process incrementally: Whenever further
nodes are cut from the \toppath{}, the extraction process is advanced by
$\Theta(k)$ steps, where $k$ is the number of nodes cut, and then the
process is stopped until the next cut at this path occurs. Thus, the work
of the extraction process is spread over several Case~1 cuts (if not
stopped before by a Case 2 cut, see below). The speed of the process is
chosen such that the extraction of $\log \log n$ nodes is completed before
that number of nodes have been cut away from the \toppath, hence it will
raise the size of the \toppath{} to at least $2 \log\log n$ again. In
general, we maintain the additional invariant that the \toppath{} has size
at least $2 \log\log n$, unless an extraction process is ongoing. For
larger values of $k$ (around $\log\log n$), up to two extraction processes
(the first of which could be partly done by a previous access) will be used
to ensure this.

Case 2: We traverse the entire \toppath{} of path representation $A$, and
enter the \bottomtree{}. Let $x$ be the last traversed node in $A$ and let
$y$ be the marked child of $x$ that is the root of the next path
representation on the search path. First, we finish any pending extraction
process in $A$, so that its \bottomtree{} becomes a valid red-black
tree.\todo{This could move $x$ onto the top path, which does not seem
anticipated by the existing phrasing. Hence the changed phrasing. Has
implications for concatenate, see Todo below. Rolf.} Then we rebuild the
\toppath{} into a red-black tree in linear time (details appear under the
description of concatenate below), and we join it with the \bottomtree{}
using C{\scriptsize ONCATENATE}-T{\scriptsize ANGO}.  Then we perform
C{\scriptsize UT}-T{\scriptsize ANGO}($A'$, $d$) where $A'$ is the combined
red-black tree, and $d=d_P(y)-1$. After this operation, all nodes of depth
greater than $d$ are removed from the path representation $A$ to form a new
red-black tree $B$ attached to $A$ (the root of $B$ is marked in the
process). To make the tree~$B$ a valid path representation, we perform an
extraction process twice, which extracts $2\log \log n$ nodes from it to
form a \toppath. Finally we unmark $y$. This takes $O(\log \log n)$ time in
total.

\paragraph{CONCATENATE:}

What is cut out during an access is a sequence of \toppath{}s (case~1 cuts)
and red-black trees (case 2 cuts) hanging below each other. We have to
concatenate this sequence into one path representation. We first rebuild
all sequences of consecutive subpaths (maximum sequences of nodes which
have one marked child) into valid red-black trees, in time linear in the
number of nodes of each sequence (details below). This leaves a sequence of
valid red-black trees hanging below each other. Then we iteratively perform
C{\scriptsize ONCATENATE}-T{\scriptsize ANGO}($A$,$B$), where $A$ is the
current highest red-black tree and $B$ is the tree hanging below $A$, until
there is one remaining red-black tree. Finally we extract $2\log \log n$
nodes from the obtained red-black tree to construct the \toppath{} of the path
representation. The time used for concatenate is bounded by the time used
already during the search and cut part of the access.

\todo{(Rolf)Karim: I here removed the note on efficiency by reusing the toppath,
as this may no longer exists, by the change of Case 2, as mentioned in the
last Todo above. Please check. Rolf.\\
(Karim) I think we still have that the sequence of cut out parts is composed first by
$\Theta(\log \log n)$ nodes forming a path. Anyway we will discuss that for the final 
version.}

One way to convert a path of length $k$ into a red-black tree in $O(k)$
time is as follows: consider each node on the path as a red-black tree of
size one. We iteratively perform a series of C{\scriptsize
ONCATENATE}-T{\scriptsize ANGO}($A$,$B$) operations for each pair of
red-black trees $A$ followed by $B$. After each iteration the number of
trees is divided by 2 and their size is doubled, giving a total time for
rebuilding a path into a valid red-black tree of $O(\sum_{i=1}^{\log k}
ik/2^i)=O(k)$.

\paragraph{EXTRACT:}


We now show how to perform the central process of our structure, namely
extracting the next part of a \toppath{} from a
\bottomtree{}. Specifically, we will extract a subpath of $\log \log n$
nodes of minimum depth (in $P$) from the \bottomtree{}~$A'$ of a given path
representation $A$, using $O(\log \log n)$ time.

Let $x$ be the deepest nodes on the \toppath{} of $A$, such that the
unmarked child of $x$ corresponds to the root of the \bottomtree{}
$A'$. The extraction process will separate the nodes of depth (in $P$)
smaller than $d=d_P(x)+\log \log n$ from the \bottomtree{}~$A'$.
Let a \emph{zig} segment of a preferred path $p$ be a maximal sequence of
nodes such that each node in the sequence is linked to its right child in
$p$. A \emph{zag} segment is defined similarly such that each node on the
segment is linked to its left child (see Fig.~\ref{fig-zig-zag}).

The key observation we exploit is the following: the sequence of all zig
segments, ordered by their depth in the path, followed by the sequence of
all reversed zag segments, ordered reversely by their depth in the path, is
equal to the ordering of the nodes in key space (see
Fig.~\ref{fig-zig-zag}).
This implies that to extract the nodes of depth smaller than $d$ (in $P$)
from a \bottomtree{}, we can cut the extreme ends (in key space) of the
tree, linearize them to two lists, and then combine them by a binary merge
procedure using depth in $P$ as the ordering. This forms the core of the
extract operation.

\begin{figure}
    \begin{center}
        \includegraphics[width=0.4\textwidth]{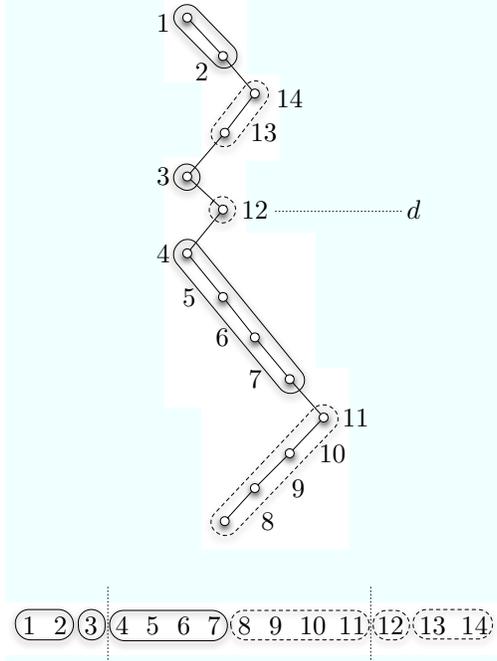}
    \end{center}
    \caption{\label{fig-zig-zag} A path, its decomposition into zig (solid
    regions) and zag (dashed regions) segments, and its layout in key order.}
\end{figure}

We have to do this using rotations, while maintaining a tree at all
times. We now give the details of how to do this, with
Fig.~\ref{fig-extract} illustrating the process.

Using extra fields of each node storing the minimum and maximum depth value
(in $P$) of nodes inside its subtree, we can find the node $\ell'$ of
minimum key value that has a depth greater than $d$ in $O(\log \log n)$
time, by starting at the root of $A'$ and repeatedly walking to the
leftmost child whose subtree has a node of depth greater than $d$. Then
define $\ell$ as the predecessor of $\ell'$. Symmetrically, we can find the
node $r'$ of maximum key value that has depth greater than $d$ and define
$r$ as the successor of $r'$.

First we \todo{Operation split should be introduced properly, with BST
formulation. Probably in tango section. Rolf.} split $A'$ at $\ell$ to
obtain two subtrees $B$ and $C$ linked to the new root $\ell$ where $B$
contains a first sequence of nodes at depth smaller than $d$. Then we split
$C$ at $r$ to obtain the subtrees $D$ and $E$ where $E$ contains a second
sequence of nodes at depth smaller than $d$.

In $O(\log \log n)$ time we convert the subtrees $B$ and $E$ into paths
corresponding to an ordered sequences of zig segments for $B$ and zag
segments for $E$. To do so we perform a left rotation at the root of $B$
until its right subtree is a leaf (i.e., when its right child is a marked
node). Then we repeat the following: if the left child of the root has no
right child the we perform a right rotation at the root of $B$ (adding one
more node to right spine, which will constitute the final path). Otherwise
we perform a left rotation at the left child of the root of $B$, moving its
right subtree into the left spine. This process takes a time linear in the
size of $B$, since each node is involved in a rotation at most 3 times
(once a node enters the left spine, it can only leave it by being added to
the right spine). A symmetric process is performed to convert the subtree
$E$ into a path.

The last operation, called a \emph{zip}, merges (in term of depths in $P$)
the two paths $B$ and $E$, in order to form the next part of the
\toppath. We repeatedly select the root of $B$ or $E$ that has the smallest
depth in the tree $P$. The selected root is brought to the bottom of the
\toppath{} using $O(1)$ rotations. The zip operation stops when the subtrees
$B$ and $E$ are both empty. Eventually, we perform a left rotation at the
node $\ell$ if needed, i.e., if $r$ has a smaller depth in $P$ than $\ell$.

\begin{figure}
    \begin{center}
        \includegraphics[width=0.9\textwidth]{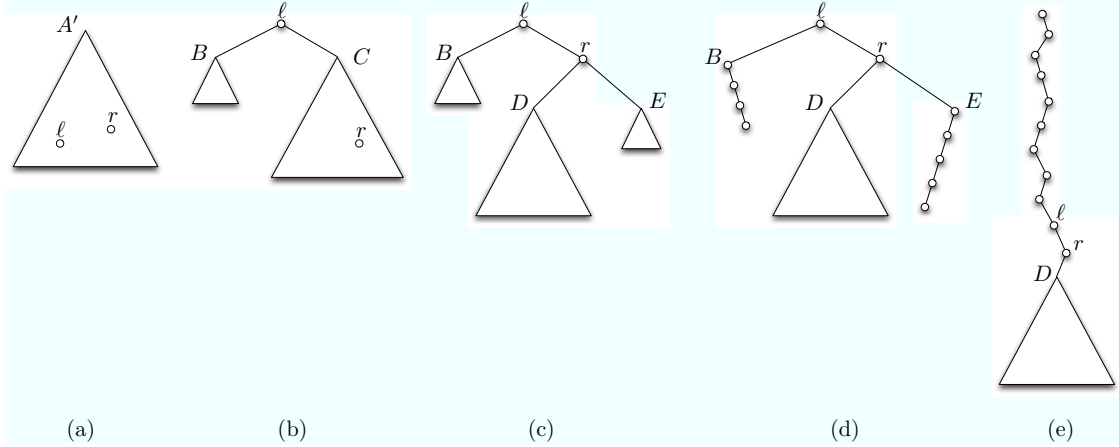}
    \end{center}
    \caption{\label{fig-extract} (a) Tree $A'$. (b) Split $A'$ at $\ell$. (c) Split $C$ at $r$. (d) Convert the subtrees $B$ and $E$ into paths. (e) Zip the paths $B$ and $E$.}
\end{figure}

The time taken is linear in the extracted number of nodes, i.e, $\log \log
n$. The process consists of a series of rotations, hence can stopped and
resumed without problems. 

Therefore, the discussion presented in this section allows us to conclude with
the following theorem.

\begin{thm}
Our hybrid tree data structure is $O(\log \log n)$-competitive and performs
each access in $O(\log n)$ worst-case time.
\end{thm}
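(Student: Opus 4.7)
The plan is to verify two claims separately: (i) every access runs in $O(\log n)$ worst-case time, and (ii) the total time over a sequence $X$ is $O(\log\log n \cdot {\rm OPT}(X))$.

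For (i), I would first invoke Lemma~\ref{height} to conclude that the search path $p$ visits $O(\log n)$ nodes, giving the worst-case search cost directly. Next I would account for the restructuring cost. Traversing $p$, the procedure performs at most one cut per preferred path crossed: either a Case~1 cut, whose cost is proportional to the number $k$ of nodes touched in the corresponding \toppath{}, or a Case~2 cut, whose $O(\log\log n)$ cost is paid for by the $\Theta(\log\log n)$ nodes of the fully-traversed \toppath{}. In both situations the cut cost is dominated by the search cost along $p$. Finally, as noted in the description of C{\scriptsize ONCATENATE}, the final concatenation that rebuilds the new top preferred path is bounded by the total search plus cut work already performed. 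Summing gives $O(\log n)$ per access.

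For (ii), I would bound the same access differently, by $O(i\log\log n)$ where $i$ is the number of preferred paths whose preferred child flips during the access. A Case~1 cut costs at most $O(\log\log n)$ since the \toppath{} has length $O(\log\log n)$, a Case~2 cut costs $O(\log\log n)$ by construction, and the concluding concatenate is again absorbed by search plus cut. Summing over all accesses in $X$ and using that $\sum_j i_j = O({\rm IB}(X))$ (the same accounting as for tango trees~\cite{tango}) yields total running time $O(\log\log n \cdot {\rm IB}(X))$. Theorem~\ref{ib} then gives $O(\log\log n \cdot {\rm OPT}(X))$, the additive $O(n)$ slack being absorbed since we may assume ${\rm OPT}(X) \ge n$ (every element is touched at least once if we preload with any initial sequence, or alternatively charge it to preprocessing).

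The step I expect to be most delicate is justifying that the two per-access bounds $O(\log n)$ and $O(i\log\log n)$ really hold \emph{simultaneously} in the worst case, despite the fact that E{\scriptsize XTRACT} is executed incrementally across several accesses. To make this airtight I would isolate a structural invariant, namely that at the start of every access each \toppath{} either has length at least $2\log\log n$ or has a pending extraction process whose remaining work has already been paid for and that completes before $\log\log n$ further Case~1 cuts can occur on that path. Given this invariant, Case~1 cuts always find enough nodes in the \toppath{} to charge their cost against, and Case~2 cuts can finish any pending extraction in $O(\log\log n)$ time before invoking C{\scriptsize UT}-T{\scriptsize ANGO}; hence both running-time bounds go through and the theorem follows.
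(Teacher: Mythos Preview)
Your proposal is correct and follows essentially the same approach as the paper: the paper's argument (given in the discussion preceding the theorem rather than in a separate proof) also establishes the bound $O(\min\{\log n, i\log\log n\})$ per access by charging Case~1 cuts to the $k$ top-path nodes traversed, charging Case~2 cuts' $O(\log\log n)$ cost to the fully traversed top path of length $\Theta(\log\log n)$, and absorbing the concatenate into the search-plus-cut cost, then invoking Lemma~\ref{height} and Theorem~\ref{ib}. Your explicit articulation of the top-path invariant governing incremental extraction is exactly the invariant the paper states informally in the Case~1 description.
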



\subsection{Hybrid Trees and the BST Model} 

We specify in the description of the cut operation (more precisely, in
case~1) that the extraction process is executed incrementally, i.e., the
work is spread over several cut operations. In order to efficiently revive 
an extraction process which has been stopped at some point in the past, we
have to return to the position where its next rotation should take
place. This location is unique for each path representation, and is always
in its \bottomtree{}. Thus, traversing its \toppath{} to reach the
\bottomtree{} would be too costly for the analysis of case~1. Instead, we
store in the marked node (the first node of the \toppath{}) appropriate
information on the state of the process. Additionally, we store an extra
pointer pointing to the node where the next rotation in the process should
take place. This allows us to revive an extraction process in constant
time. Unfortunately, the structure so obtained is not in the BST model (see
Section~\ref{model}), due to the extra pointer. In the next section we show
how to further develop the idea from this section into a data structure
fitting the BST model.

Still, we note that the structure of this section can be implemented in the
comparison based model on a pointer machine, with access sequences $X$
being served in $O(\log \log n \, {\rm OPT}(X))$ time, and each access
taking $O(\log n)$ time worst-case.

\section{Zipper Trees}
\label{zippertree}

The data structure described in the previous section is a BST, except that
each marked node has an extra pointer facilitating constant time access to
the point in the path representation where an extraction process should be
revived. In this section, we show how to get rid of this extra pointer and
obtain a data structure with the same complexity bounds, but now fitting
the BST model described in Section~\ref{model}. To do so, we develop a more
involved version of the representation of preferred paths and the
operations on them.
The goal of this new path representation is to ensure that all rotations of
an extraction process are located within distance $O(1)$ of the root of the
tree of the representation. The two main ideas involved are: 1) storing the
\toppath{} as lists,
hanging to the sides of the root, from which 
the \toppath{} can be generated incrementally by merging as it is traversed
during access, and 2) using a version of the split operations that only
does rotations near the root. The time complexity analysis follows that of
hybrid trees, and will not be repeated.

\subsection{Path Representation} For all preferred paths in $P$ we
decompose its highest part
into two sequences, containing its zig and its zag segments,
respectively. These are stored as two paths of nodes, of increasing and
decreasing key values, respectively. As seen in
Section~\ref{maintaining-hybrid-trees} (cf.\ Fig.~\ref{fig-zig-zag}), both
will be ordered by their depth in $P$. Let $\ell$ and $r$ be the highest
node in the zig and zag sequence respectively. The node $\ell$ will be the
root of the auxiliary tree (the marked node). The remainder of the zig
sequence is the left subtree of $\ell$, $r$ is its right child, and the
remainder of the zag sequence is the right subtree of $r$. We call this
upper part of the tree a \emph{zipper}. We repeat this decomposition once
again for the next part of the path
to obtain another zipper which is the left subtree of $r$. Finally the
remaining of the nodes on the path are stored as a red-black tree of height
$O(\log \log n)$, hanging below the lowest zipper. Fig.~\ref{fig-zipper}
illustrates the construction. The two zippers constitute the
\emph{\toppath}, and the red-black tree the \emph{\bottomtree}. Note that
the root of the \bottomtree{} is reachable in $O(1)$ time from the root of
the path representation. We will maintain the invariant that individually,
the two zippers contain at most $\log\log n$ nodes each, while (if the
\bottomtree is non-empty) they combined contain at least $(\log\log
n)/2$ nodes.

A \emph{zipper tree} consists of all the preferred paths of $P$,
represented as above, linked together to form one large tree.

\begin{figure}
    \begin{center}
        \includegraphics[width=0.45\textwidth]{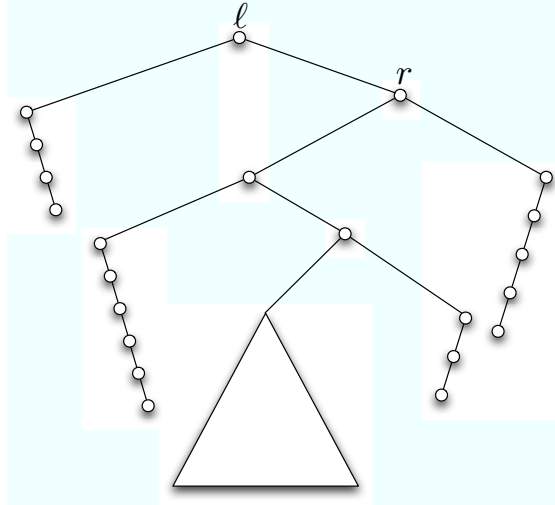}
    \end{center}
    \caption{\label{fig-zipper} The path representation in zipper trees.}
\end{figure}

\subsection{Maintaining Zipper Trees under Accesses}

We now give the differences, relative to
Section~\ref{maintaining-hybrid-trees}, of the operations during an access.

\paragraph{CUT:}
When searching a path representation, we incrementally perform a zip
operation (i.e., a merge based on depth order) on the top zipper, until
it outputs either the node searched for, or a node that leads to the next
path representation. If the top zipper gets exhausted, the lower zipper
becomes the upper zipper, and an incremental creation of a new lower zipper
by an extraction operation on the \bottomtree{} is initiated (during which
the lower zipper is defined to have size zero). Each time one more node
from the top zipper is being output (during the current access, or during a
later access passing through this path representation), the extraction
advances $\Theta(1)$ steps. The speed of the extraction process is chosen
such that it finishes with $\log\log n$ nodes extracted before $(\log\log
n)/2$ nodes have been output from the top zipper. The new nodes will make
up a fresh lower zipper, thereby maintaining the invariant.

If the access through a path representation overlaps (in time) at most one
extraction process (either initiated by itself or by a previous access), it
is defined as a case~1 cut. No further actions takes place, besides the
proper remarkings of roots of path representations, as in
Section~\ref{maintaining-hybrid-trees}. If a second extraction process is
about to be initiated during an access, we know that $\Theta(\log\log n)$
nodes have been passed in this path representation, and we define it as a
case~2 cut. Like in Section~\ref{maintaining-hybrid-trees} this now ends by
converting the path representation to a red-black tree, cutting it like in
tango trees, and then converting the red-black tree remaining into a valid
path representation (as defined in the current section), all in
$\Theta(\log\log n)$ time.

\paragraph{CONCATENATE:} There is no change from
Section~\ref{maintaining-hybrid-trees}, except that the final path
representation produced is as defined in the current section.

\paragraph{EXTRACT:} The change from Section~\ref{maintaining-hybrid-trees}
is that the final zip operation is not performed (the process stops at
step~(d) in Fig.~\ref{fig-extract}), and that we must use a search and a
split operation on red-black trees where all structural changes consist of
rotations a distance $O(1)$ from the root\footnote{As no actual details of
the split operation used are given in \cite{tango}, we do not know whether
their split operation fulfills this requirement. It is crucial for our
construction that such a split operation is possible, so we describe one
solution here.} (of the \bottomtree{}, which is itself at a distance $O(1)$
from the root of the zipper tree). Such a split operation is described in
the appendix (Part I). Note that searching takes place incrementally as part of the
split procedure.

\section{Conclusion}

The main goal in this area of research is to improve the competitive ratio of $O(\log \log n)$. Here we have been able to tighten other bounds, namely the worst-case search time. We think this result helps providing a better understanding of competitive BSTs. It could be that competitiveness is in conflict with balance maintenance, i.e., an $O(1)$-competitive binary search tree could possibly not  guarantee an $O(\log n)$ worst-case search time. For instance splay-tree~\cite{splay} and GreedyFuture tree \cite{munro2000competitiveness, geoBST}, the two BSTs that are conjectured to be dynamically optimal, do not guarantee optimal worst-case search time. Thus even if dynamically optimal trees exist, our result could still be a good alternative with optimal worst-case performance. 

We also think that the ideas developed to achieve our result have their own interest. They can be used to improve the worst-case performance of a data structure while maintaining the same amortized performance. For example we show in the appendix (Part II) how to adapt them in order to improve the worst-case running time of the multipop operation on a stack. 

\bibliographystyle{abbrv}    
\bibliography{biblio}

\newpage
\section*{APPENDIX}

\section*{Part I}

We present the split operation on red-black trees where all structural 
changes consist of rotations a distance $O(1)$ from the root.The end 
result should be a tree where the root is the node $x$ split on, 
with a left child that is a red-black tree on the nodes smaller than $x$,
and a right child that is a red-black tree on the nodes larger than
$x$. Red-black trees are binarized (2,4)-trees, i.e., multi-way nodes are
substituted by small, perfectly balanced binary search trees, with the
color annotation keeping track of the boundaries between them. For ease of
exposition, we describe the split process partly in (2,4)-tree terms.

The overall idea of a normal split operation on (2,4)-trees is to follow
the search path towards~$x$ and cleave (2,4)-tree nodes passed, then later
glue the two parts of cleaved nodes (now maybe of too low degree) to their
level-wise siblings (now maybe of too high degree), and then split (in the
sense of rebalancing of (2,4)-nodes) these if necessary. By cleaving a
(2,4)-node, we mean dividing the keys (here, binary nodes) inside the
(2,4)-node, and the subtrees of the (2,4)-node, into two parts, based on
whether they order-wise are smaller or larger than $x$.

Normally, cleaving is top-down, and gluing and splitting is bottom-up.
However, to keep the working point, where the rotations take place, fixed
at the root---and maintain a BST at all times---we instead fold the two
sides of the cleaved path around during searching-and-cleaving (treating
each side of the cleaved path sort of as a rope in a pulley, the pulleys
being the working point), while keeping track of heights (in (2,4)-tree
terms) of subtrees. Later, the gluing process is done in a reverse action
(running the ropes the other way), while making use of the heights of
subtrees.

We now describe the details of the cleaving process. We assume wlog.\ that
the search path initially proceeds to the left child of the root (the other
case being symmetric). During the cleaving process, we maintain the
following shape of the binary tree: Let $P$ be the search path of the
initial red-black tree traversed so far, and let $P_{(2,4)}$ be the
(2,4)-nodes touched by $P$. The top of the right spine (including the root)
consists of the binary nodes in $P_{(2,4)}$ whose keys are larger than
$x$. The subtrees hanging from these nodes are the subtrees of $P_{(2,4)}$
which are larger than $x$.  The top of the left spine (excluding the root)
consists of the binary nodes in $P_{(2,4)}$ whose keys are smaller than
$x$. The subtrees hanging from these nodes are the subtrees of $P_{(2,4)}$
which are smaller than $x$, except for the topmost of the subtrees, which
is the remaining part of the initial tree. See
Fig.~\ref{fig-split-from-root}.

\begin{figure}
    \begin{center}
        \includegraphics[width=0.9\textwidth]{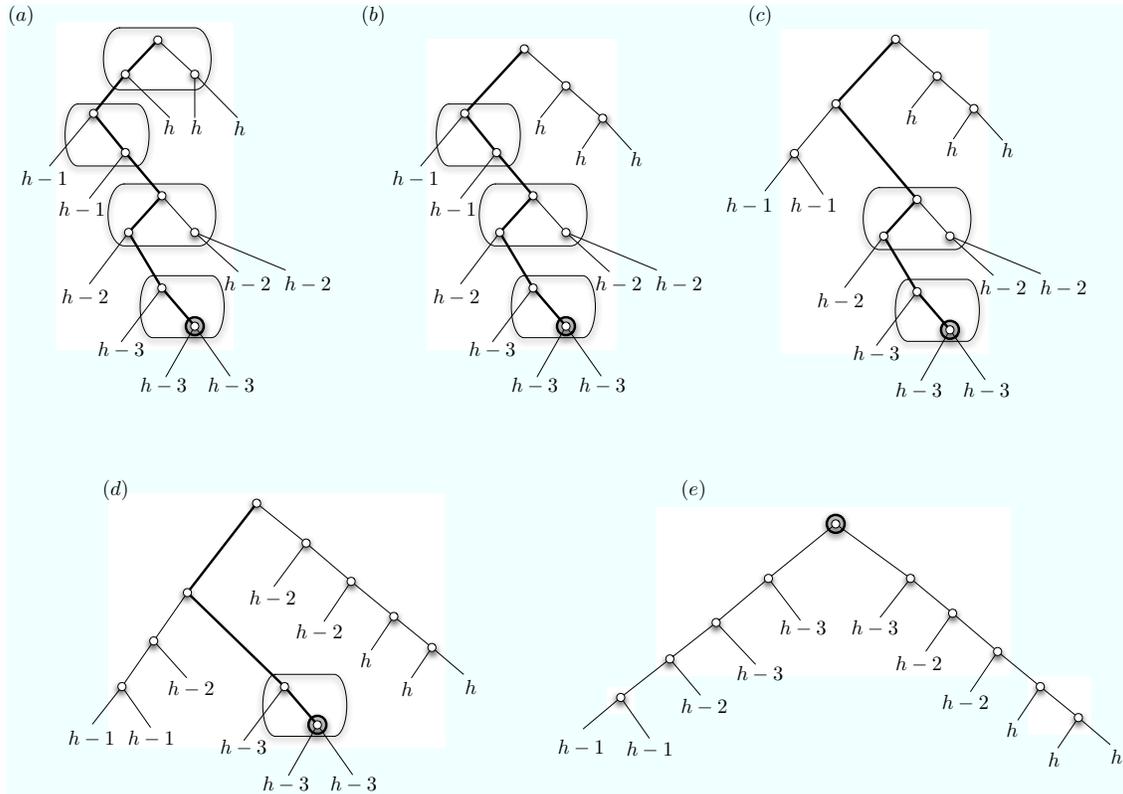}
    \end{center}
    \caption{\label{fig-split-from-root} A split operation (first half of
    the operation) using rotations close to the root. Rounded boxes
    delineate $(2,4)$-nodes, values $h$, $h-1$, etc., designate subtrees of
    that $(2,4)$-tree height. The thick line is the search path, ending at
    the circled node.}
\end{figure}

Advancing the cleaving process by one more (2,4)-node of the search path
means rebuilding a subtree consisting of at most the root, its left child,
and the up to three binary nodes of the next (2,4)-node. This rebuilding
can be done in $O(1)$ rotations on these nodes. The cleaving process ends
when the (2,4)-node containing $x$ is the next such node, with a final
rebuilding which brings $x$ up as root.

We now need to change both subtrees of the root into valid red-black
trees. This will be done in a downwards fashion, one spine at a time. We
describe the process for one spine.

The heights, as (2,4)-trees, of the subtrees hanging from the top of the
spine are increasing when going downwards. There are initially between zero
and three subtrees of each height. For sake of induction, we assume there
are between zero and five subtrees of each height. If there are at least
two subtrees of the next height $h$, one or two (2,4)-nodes of height
$h+1$, hanging from the spine, are formed by rotations on the top nodes of
the spine. These nodes will take part in the next group of (2,4)-nodes of
heights $h+1$, and the process continues with these. Otherwise, there is at
most one subtree of height $h$, and we will look to the next non-empty
group of subtrees (of height $h+k$ for some $k>0$) hanging from the
spine. Of these subtrees, all valid (2,4)-trees, we fold the top-most one
around its root: the binary nodes in its (2,4)-tree root are moved onto the
spine using rotations, making its (2,4)-tree subtrees hang from the
spine. This is repeated $k$ times, which leaves hanging from the spine
between one and three new subtrees of height $h+k-i$, for $i = 1 \dots
k-1$, and between two and four new subtrees of height $h$, at the cost of
$O(k)$ rotations. There are now at least two subtrees of height~$h$ hanging
from the spine, and the process continues with these. By induction, there
will never be more than five trees of any size during this process, and we
will end up with a legal (2,4)-tree, hence a legal red-black tree.

The total time of the split is bounded by the height of the initial tree.

\section*{Part II: Stack with Improved Multipop Operation}
\label{stack}
A stack~\cite{Cormen2001} is a fundamental data structure that asks for the following operations:
\begin{itemize}
\item {P{\scriptsize USH}($x$)} -- insert the element $x$ on the top of the stack.
\item {P{\scriptsize OP}()} -- delete the element from the top of the stack if it is not empty.
\item {M{\scriptsize ULTIPOP}($k$)} -- deletes $k$ elements from the top of the stack, or deletes all elements if the stack has less than $k$ elements.
\end{itemize}
In a pointer machine, a stack is usually implemented as a simple linked list. The operation {P{\scriptsize USH}($x$)} creates an element with value $x$ and insert it as the new head of the list. The operation {P{\scriptsize OP}()} removes the head of the list if this one is not empty. The operation {M{\scriptsize ULTIPOP}($k$)} is performed using $k$ times the {P{\scriptsize OP}()} operation.

Over a sequence of stack operations the amortized cost of each operation is $O(1)$ since an element that has been pushed on the stack can only be popped once from it. Concerning the worst-case performance of the operations it is clear that {P{\scriptsize USH}($x$)} and {P{\scriptsize OP}()} take O(1) wost-case time whereas {M{\scriptsize ULTIPOP}($k$)} takes $O(k)$ worst-case time. Thus the worst-case running time of this operation can reach $O(n)$ if $k\geq n$ where $n$ is the number of elements in the stack. 
Here we develop a stack that improves the running time of {M{\scriptsize ULTIPOP}()}. 

The structure is composed of two parts: first a linked list $L$ of size $\Theta( \log n)$ that contains the most recently pushed elements (as in the original structure) and secondly a red-black tree~\cite{redblack} $T$ containing the remaining elements. Each element $x$ in the stack has a height $h(x)$ which is defined as the number of elements that were in the stack before $x$ was pushed in. The tree $T$ is ordered based on the height of the elements so that the elements in the left (or the right) subtree of an element $x$ have a smaller (or greater) height than $h(x)$. The size $|T|$ of the tree is stored at its root.

The stack operations {P{\scriptsize USH}($x$)} and {P{\scriptsize OP}()} ({M{\scriptsize ULTIPOP}()} is considered later) are essentially performed as in the standard structure. They either add a new head element to the list $L$ or remove it. Thus $L$ is modified and the invariant about its size has to be maintained, i.e., $|L|=\Theta( \log n)$. More specifically we maintain $\log n \leq |L| \leq 5\log n$. In order to do so we use an \emph{extraction} and a \emph{contraction  process}, described below, which transfer $\log n$ elements from the tree into the list or from the list into the tree, respectively, in $O(\log n)$ steps. These processes run incrementally: Whenever an element is popped/pushed from the list, the ongoing extraction/contraction process is advanced by $\Theta(1)$ steps and then the process is suspended until the next operation occurs. Thus the work of an extraction/contraction process is spread over several operations (if not stopped before by a specific kind of {M{\scriptsize ULTIPOP}()} operation, see below) which means that {P{\scriptsize USH}($x$)} and {P{\scriptsize OP}()} still take $O(1)$ worst-case time. The speed of the process is chosen such that the extraction/contraction of $\log n$ elements is completed before $\log n$ push/pop operations have been performed. Whenever $|L|$ reaches $4\log n$ a contraction process is launched and when $|L|$ reaches $2\log n$ an extraction process is launched. Hence there is no more than one extraction/contraction process running at the same time and the size of the list is always maintained between $\log n$ and $5\log n$. 

The operation {M{\scriptsize ULTIPOP}(k)} is performed in the following way: if $k\leq \log n$ then we perform $k$ times the {P{\scriptsize OP}()} operation which takes $O(k)$ worst-case time. Otherwise we finish any pending extraction/contraction process in the stack. We perform $O(1)$ contraction processes until all the elements are contained in the tree $T$. Then we binary search in $T$ for the element $x$ with $h(x)=|T|-k$ and we perform a cut of the tree at the element $x$ using the standard cut operation of red-black trees~\cite{redblack}. The tree containing the $k$ highest elements of the stack is discarded. Finally we perform three extraction processes so that the stack satisfies the invariant. This takes $O(\log n)$ worst-case time in total.

We present the extraction and the contraction process mentioned in the previous description: \\

\noindent {\bf {E{\scriptsize XTRACT}()}} is the operation that transfers the $\log n$ highest elements from the red-black tree into the linked list in $O(\log n)$ time. It first binary searches in $T$ the element $x$ with $h(x)=|T|-\log n$. Once $x$ is found, the tree $T$ is split at $x$ using the standard split operation of red-black trees~\cite{redblack}. The extracted tree containing the highest elements is converted in a linked list in $O(\log n)$ time and finally attached to the end of the list of the stack. \\

\noindent {\bf {C{\scriptsize ONTRACT}()}} is the operation that performs the inverse of {E{\scriptsize XTRACT}()} with the same running time. It transfers the set $S$ of the $\log n$ lowest elements from the linked list into the red-black tree in $O(\log n)$ time. The operation performs a walk along the list to find the position of the first element of $S$, then the list is cut at this precise point. The sublist containing the elements of $S$ is converted into a red-black tree in $O(|S|)=O(\log n)$ time. Finally the newly obtained tree is joined, in $O(\log n)$, with the red-black tree of the stack.\\

\begin{thm}
A stack can be implemented such that the operations {P{\scriptsize USH}()} and {P{\scriptsize OP}()} take $O(1)$ wost-case time and the operation {M{\scriptsize ULTIPOP}($k$)} takes $O(\min\{k,\log n\})$ worst-case time. All these operations take $O(1)$ amortized time. 
\end{thm}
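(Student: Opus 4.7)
The plan is to verify the three claims (worst-case bounds for PUSH/POP, worst-case bound for MULTIPOP, and amortized $O(1)$ for all operations) by a careful bookkeeping argument on the invariant $\log n \leq |L| \leq 5\log n$ and the incremental extraction/contraction processes.

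First I would establish the PUSH and POP worst-case bounds. Each call does constant work on the list $L$ (head insertion/deletion), plus $\Theta(1)$ steps advancing at most one pending extraction or contraction process. So the per-call cost is $O(1)$ regardless of $k$ or $n$. The non-trivial part is to check that at most one such background process is ever pending. For this I would show by induction on the sequence of operations that the following loop invariant is preserved: at all times $\log n \leq |L| \leq 5\log n$, and a background process is running only when $|L|\in [2\log n,4\log n]$. The scheduling rule launches a contraction when $|L|$ first reaches $4\log n$ and an extraction when it first reaches $2\log n$, and since each process runs $\Theta(1)$ steps per PUSH/POP and completes within $\log n$ such operations, while the list size can change by only $1$ per PUSH/POP, $|L|$ cannot reach the opposite threshold ($\log n$ or $5\log n$) before the current process completes. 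This also shows that no two background processes ever overlap.

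Next I would handle MULTIPOP$(k)$. For $k\leq \log n$ the bound is immediate: $k$ applications of POP cost $O(k)=O(\min\{k,\log n\})$. For $k>\log n$ I need to argue the described sequence (finish the pending process, perform $O(1)$ contractions to empty $L$ into $T$, search for the element at height $|T|-k$, split $T$ at that element, discard the upper part, and run three extractions to restore the invariant) completes in $O(\log n)$ worst-case time. The key facts are: $|L|=O(\log n)$, so emptying it takes $O(\log n)$ via $O(1)$ contractions on size-$\Theta(\log n)$ batches; the red-black tree $T$ has height $O(\log n)$ so the height-keyed binary search and the red-black split both cost $O(\log n)$; and each extraction is $O(\log n)$ by construction. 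Summing gives $O(\log n)=O(\min\{k,\log n\})$ in this case. Combining both cases yields the stated worst-case bound.

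Finally, for the amortized claim I would use the standard potential-function argument for stacks: set $\Phi$ proportional to the number of elements currently on the stack, so that each PUSH has amortized cost $O(1)$ while each POP and each "unit of work" inside a MULTIPOP is paid for by the potential released when its element was pushed. The background extraction/contraction work is likewise chargeable to the PUSH/POP operations that triggered the transitions across the thresholds $2\log n$ and $4\log n$, since between two such triggerings at least $\log n$ list operations occur and absorb the $O(\log n)$ total background cost. The hardest part of the write-up will be the scheduling invariant in the first step, because one must simultaneously bound the number of concurrent background processes, show the thresholds are never crossed mid-process, and ensure the rebuild speed is fast enough under the worst-case interleaving of PUSH, POP, and large MULTIPOP calls; the rest is routine.
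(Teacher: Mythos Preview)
Your proposal is correct and tracks the paper's own argument: Part~II presents the construction and states the theorem at the end without a separate proof, so the ``proof'' is exactly the inline cost analysis you reconstruct, and your potential-function argument for the $O(1)$ amortized bound simply fills in the one claim the paper leaves unjustified. The only slip is the loop invariant ``a process is running only when $|L|\in[2\log n,4\log n]$''---a contraction launched at $|L|=4\log n$ can still be running while further pushes drive $|L|$ toward $5\log n$---but the statement you actually need and establish (the process launched at one threshold completes before $|L|$ reaches the far boundary or the opposite trigger) is the correct one and matches the paper.
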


\end{document}